\newtheorem{theorem}{Theorem}
\newtheorem{definition}{Definition}
\newtheorem{lemma}{Lemma}
\newenvironment{proof}{\vspace{-2mm}
{\bf Proof:} \rm}{\mbox{} \hfill $\Box$ \vspace{1ex} }
\title{A Simple Optimal Binary Representation of Mosaic Floorplans and Baxter
Permutations}
\author{Bryan Dawei He\thanks{Department of Computer Science, 
California Institute of Technology, Pasadena, CA 91125. Email: bryanhe@caltech.edu}}
\date{}
\begin{document}
\maketitle

\begin{abstract}
A \emph{floorplan} is a rectangle subdivided into smaller rectangular
sections by horizontal and vertical line segments. Each section
in the floorplan is called a \emph{block}. Two floorplans are considered
equivalent if and only if there is a one-to-one correspondence
between the blocks in the two floorplans such that the
relative position relationship of the blocks in one floorplan is the same as
the relative position relationship of the corresponding
blocks in another floorplan. The objects of \emph{Mosaic floorplans}
are the same as floorplans, but an alternative definition of
equivalence is used. Two mosaic floorplans are considered equivalent
if and only if they can be converted to each other by sliding the
line segments that divide the blocks. 

Mosaic floorplans are widely used in VLSI circuit design.
An important problem in this area is to find short binary
string representations of the set of $n$-block mosaic floorplans. 
The best known representation is the \emph{Quarter-State Sequence}
which uses $4n$ bits.
This paper introduces a simple binary representation
of $n$-block mosaic floorplan using $3n-3$ bits.
It has been shown that any binary representation of $n$-block mosaic
floorplans must use at least $(3n-o(n))$ bits. 
Therefore, the representation presented in this paper is
optimal (up to an additive lower order term). 

\emph{Baxter permutations} are a set of permutations defined by prohibited
subsequences. Baxter permutations have been shown to have one-to-one
correspondences to many interesting objects in the so-called 
\emph{Baxter combinatorial family}. In particular,
there exists a simple one-to-one correspondence between
mosaic floorplans and Baxter permutations. As a result, the methods introduced
in this paper also lead to an optimal binary representation
of Baxter permutations and all objects in the Baxter combinatorial family.
\end{abstract}

\maketitle

\section{Introduction} \label{section: Introduction}

In this section, we introduce the definition of mosaic floorplans and 
Baxter permutations, describe their applications and previous work in
this area, and state our main result.

\subsection{Floorplans and Mosaic Floorplans} \label{subsection: Floorplans and Mosaic Floorplans}
\begin{definition} 
{
A \emph{floorplan} is a rectangle subdivided into smaller rectangular
subsections by horizontal and vertical line segments such that no
four subsections meet at the same point.
}
\end{definition}

\begin{figure}[htbp]
\centerline
{
\includegraphics[scale=0.1]{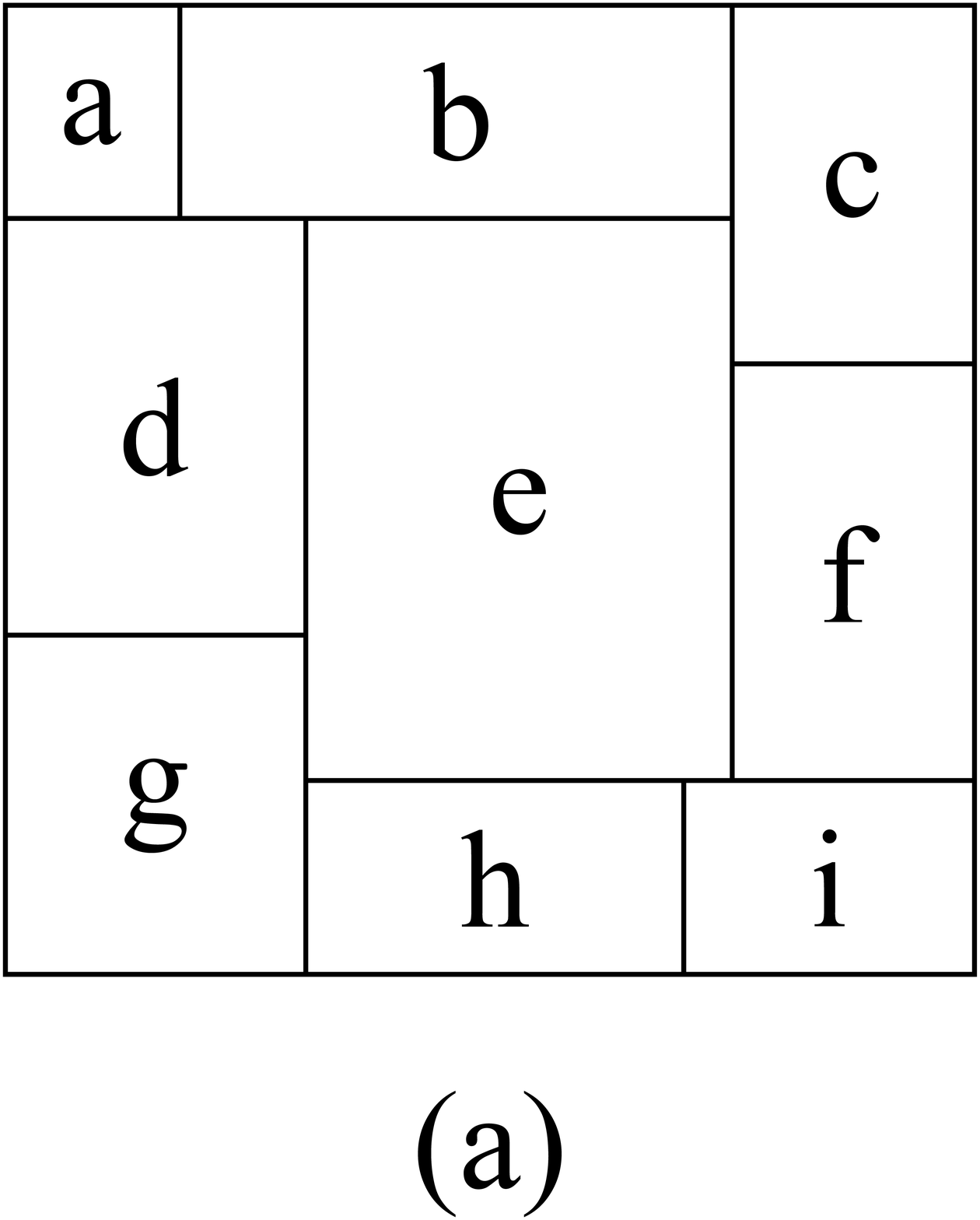}
\hspace{0.15in}
\includegraphics[scale=0.1]{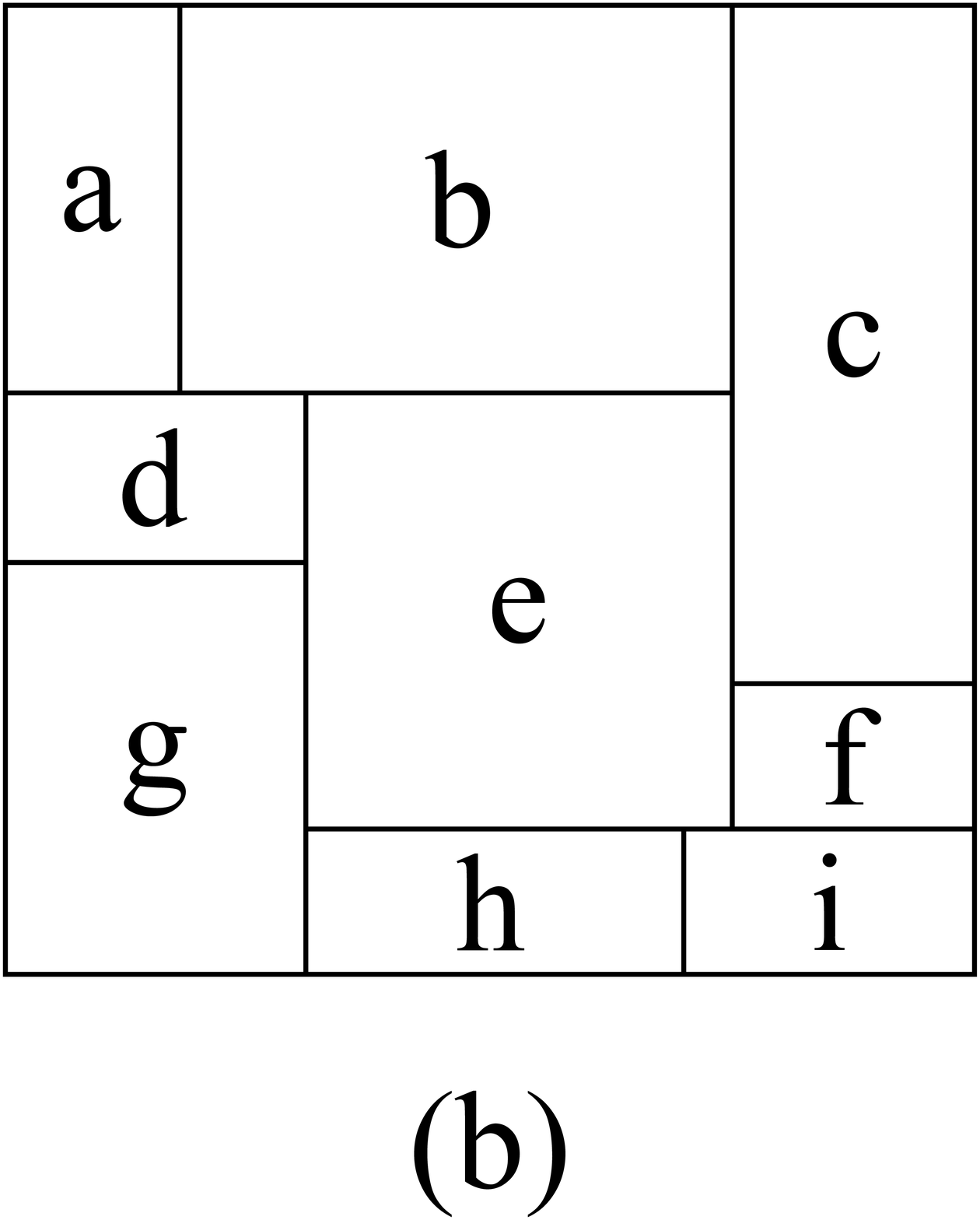}
\hspace{0.15in}
\includegraphics[scale=0.1]{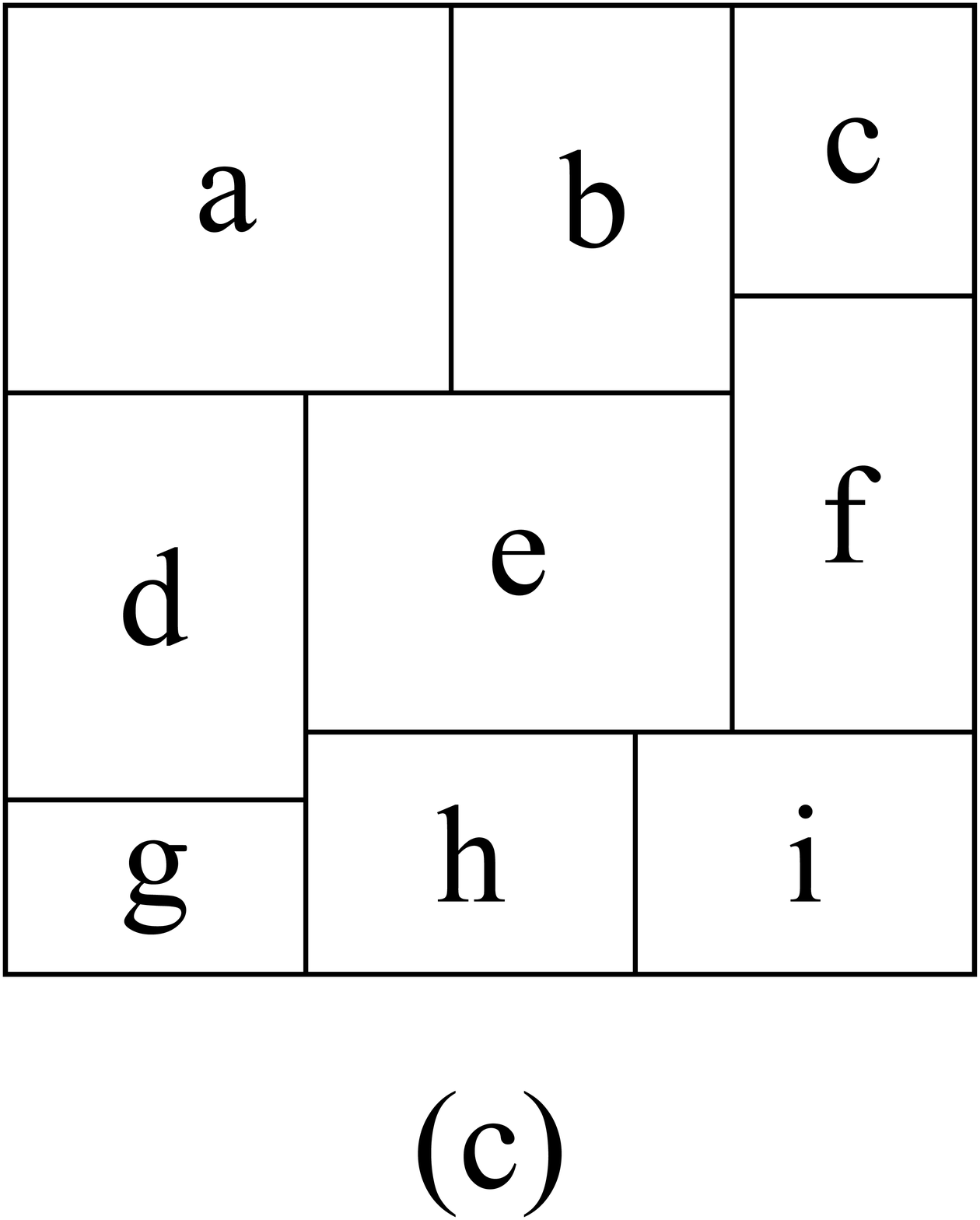}
}
\caption{Three example floorplans.}
\label{figure: example floorplans}
\end{figure}

The smaller rectangular subsections are called \emph{blocks}. 
Figure \ref{figure: example floorplans} shows three floorplans, each containing
9 blocks. Note that the horizontal and vertical line segments do not
cross each other. They can only form \emph{T-junctions}
($\vdash$, $\bot$, $\dashv$, and $\top$).


The definition of equivalent floorplans does not consider the size of the 
blocks of the floorplan. Instead, two floorplans are considered equivalent
if and only if their corresponding blocks have the same relative
position relationships. The formal definition is given below.

\begin{definition}\label{definition: equivalent floorplans}
{
Let $F_1$ be a floorplan with $R_1$ as its set of blocks. Let $F_2$ be another
floorplan with $R_2$ as its set of blocks. $F_1$ and $F_2$ are considered
\emph{equivalent floorplans} if and only if there is a one-to-one mapping
$g: R_1 \rightarrow R_2$ such that the following conditions hold:
\begin{enumerate}
\item For any two blocks $r,r'\in R_1$, $r$ and $r'$ share a horizontal
line segment as their common boundary with $r$ above $r'$ if and only if
$g(r)$ and $g(r')$ share a horizontal line segment as their common
boundary with $g(r)$ above $g(r')$.

\item For any two blocks $r,r'\in R_1$, $r$ and $r'$ share a vertical
line segment as their common boundary with $r$ to the left of $r'$
if and only if $g(r)$ and $g(r')$ share a vertical line segment as
their common boundary with $g(r)$ to the left of $g(r')$.
\end{enumerate}
}
\end{definition}

In Figure \ref{figure: example floorplans}, (a) and (b) have the same
number of blocks and the position relationships between their blocks
are identical. Therefore, (a) and (b) are equivalent floorplans. 
However, (c) is not equivalent to either.
\vspace{\baselineskip}

The objects of \emph{mosaic floorplans} are the same as the objects of
the floorplans. However, mosaic floorplans use a different definition
of equivalence. Informally speaking, two mosaic floorplans are considered
equivalent if and only if they can be converted to each other
by sliding the horizontal and vertical line segments.
The equivalence of the mosaic floorplans is formally
defined by using the \emph{horizontal constraint graph} and 
the \emph{vertical constraint graph} \cite{HHCG2000}.
The horizontal constraint graph describes the horizontal relationship
between the vertical line segments of a floorplan.
The vertical constraint graph describes the vertical relationship
between the horizontal line segments of a floorplan. The formal
definitions are given below.

\begin{definition} \label{definition: constraint graphs}
Let $F$ be a floorplan.
\begin{enumerate}
\item The \emph{horizontal constraint graph} $G_H(F)$ of $F$ is a directed graph.
The vertex set of $G_H(F)$ 1-to-1 corresponds to the set of the vertical line 
segments of $F$. For two vertices $u_1$ and $u_2$ in $G_H(F)$, there is a
directed edge $u_1 \rightarrow u_2$ if and only if there is a block $b$ in $F$ such that
the vertical line segment $v_1$ corresponding to $u_1$ is on the left boundary of $b$
and the vertical line segment $v_2$ corresponding to $u_2$ is on the right boundary
of $b$.
\item The \emph{vertical constraint graph} $G_V(F)$ of $F$ is a directed graph.
The vertex set of $G_V(F)$ 1-to-1 corresponds to the set of the horizontal line 
segments of $F$. For two vertices $u_1$ and $u_2$ in $G_H(F)$, there is a
directed edge $u_1 \rightarrow u_2$ if and only if there is a block $b$ in $F$ such that
the horizontal line segment $h_1$ corresponding to $u_1$ is on bottom boundary of $b$
and the horizontal line segment $h_2$ corresponding to $u_2$ is on the top boundary
of $b$.
\end{enumerate}
\end{definition}

The graphs in Figure \ref{figure: constraint graphs} are the constraint
graphs of all three floorplans shown in
Figure \ref{figure: example floorplans} (a), (b), and (c).
Note that the bottom (top, left, right, respectively) boundary of the
floorplan is represented by the south (north, west, east, respectively)
vertex labeled by S (N, W, E, respectively) in the constraint graphs.
Also note that each edge in $G_V(F)$ ($G_V(F)$, respectively) 
corresponds to a block in the floorplan. 

\begin{figure}[htbp]
\centerline
{
\includegraphics[scale=0.1]{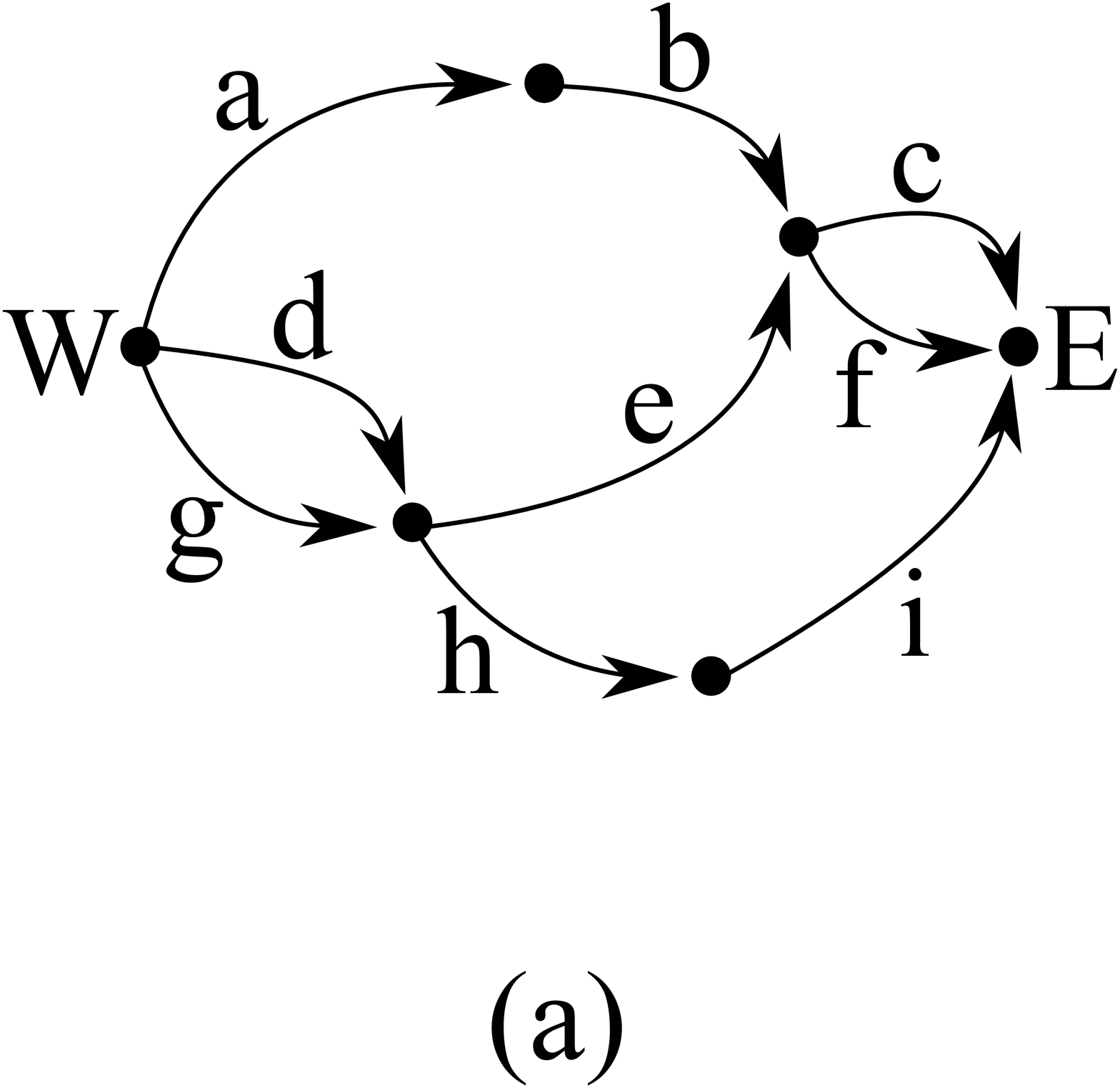}
\hspace{0.75in}
\includegraphics[scale=0.1]{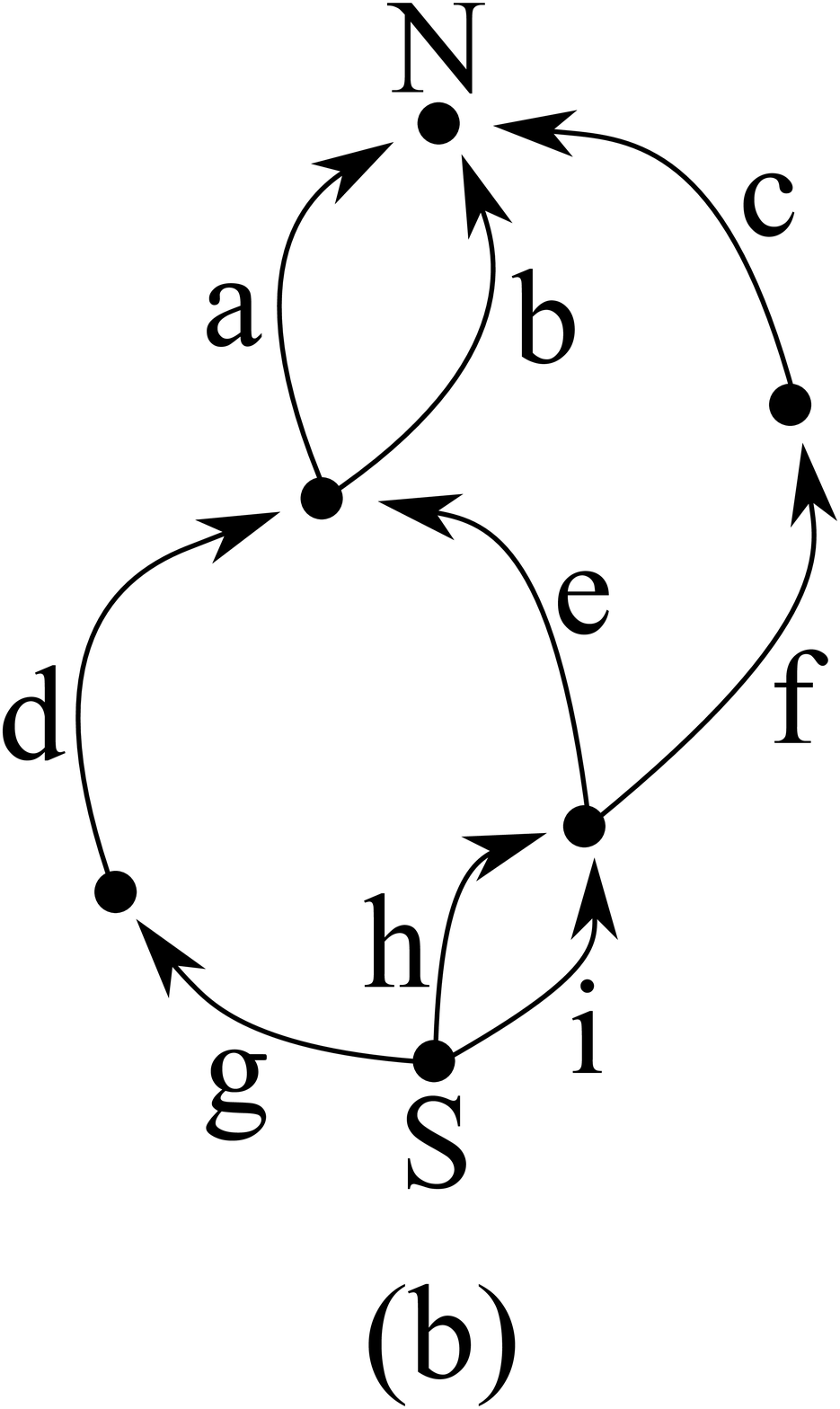}
}
\caption{The constraint graphs representing all three mosaic floorplans in Figure \ref{figure: example floorplans}. (a) is the horizontal constraint graph. (b) is the vertical constraint graph.}
\label{figure: constraint graphs}
\end{figure}

\begin{definition} \label{definition: equivalent mosaic floorplans}
Two mosaic floorplans are \emph{equivalent mosaic floorplans} if and only if they have identical horizontal constraint graphs and vertical constraint graphs.
\end{definition}

Thus all three floorplans shown in Figure \ref{figure: example floorplans} (a), (b),
and (c) are equivalent mosaic floorplans. Note that 
the floorplan in Figure \ref{figure: example floorplans} (c) is obtained from
the floorplan in Figure \ref{figure: example floorplans} (b) by sliding
the horizontal line segment between the blocks $g$ and $d$ downward;
the horizontal line segment between the blocks $f$ and $c$ upward;
the vertical line segment between the blocks $a$ and $b$ to the right.

\subsection{Applications of Floorplans and Mosaic Floorplans}\label{section: applications}

Floorplans and mosaic floorplans are used in the first major stage 
(called floorplanning) in the physical design cycle of VLSI
(Very Large Scale Integration) circuits \cite{Le1990}. The blocks in a
floorplan correspond to the components of a VLSI chip. The floorplanning 
stage is used to plan the relative position of the circuit components.
At this stage, the blocks do not have specific sizes assigned to them yet.
So only the position relationship between the blocks are considered.

For a floorplan, the wires between two blocks run cross their common boundary.
In this setting, two equivalent floorplans provide the same connectivity between blocks.
For a mosaic floorplan, the line segments are the wires. Any block with a line
segment on its boundary can be connected to the wires represented by the line segment.
In this setting, two equivalent mosaic floorplans provide the same connectivity
between blocks.


One of the main problems in this area is to find a short binary representation
of floorplans and mosaic floorplans. These representations are used by various
algorithms to generate floorplans in order to solve various VLSI layout
optimization problems. Shorter representation allows more efficient
optimization algorithms.

\subsection{Baxter Permutations}

\emph{Baxter permutations} are a set of permutations defined by prohibited
subsequences. They were first introduced in \cite{Ba1964}. 
It was shown in \cite{Gi2011} that the set of Baxter permutations
has one-to-one correspondences to many interesting objects in the
so-called \emph{Baxter combinatorial family}.
For examples, \cite{BBF2010} showed that \emph{plane bipolar
orientations} with $n$ edges have a one-to-one correspondence with
Baxter permutations of length $n$.  \cite{Ca2010} establishes a
relationship between Baxter permutations and pairs of alternating
sign matrices.


In particular, it was shown in \cite{ABP2006,DG1998,YCCG2003}
that mosaic floorplans are one of the objects in the Baxter
combinatorial family. A simple and efficient one-to-one
correspondence between mosaic floorplans and Baxter permutations
was established in \cite{ABP2006,DG1998}. 
As a result, any binary representation of mosaic floorplans can also be
converted to a binary representation of Baxter permutations.

\subsection{Previous Work on Representations of Floorplans and Mosaic Floorplans} \label{section: previous}

Because of their applications in VLSI physical design, the representations
of floorplans and mosaic floorplans have been studied extensively by
mathematicians, computer scientists and electrical engineers. Although their
definitions are similar, the combinatorial properties of floorplans
and mosaic floorplans are quite different. The following is a partial
list of previous research on floorplans and mosaic floorplans.

\vspace{\baselineskip}
\noindent \underline{Floorplans:}

There is no known formula for calculating $F(n)$, the number of $n$-block 
floorplans. The first few values of $F(n)$ (starting from $n=1$) are
$\{ 1, 2, 6, 24, 116, 642, 3938, \ldots\}$.
Researchers have been trying to bound the range of $F(n)$. 
In \cite{ANY2007}, it was shown that there exists a constant 
$c = \lim_{n \rightarrow \infty} (F(n))^{1/n}$ and $11.56 < c < 28.3$. 
This means that $11.56^n \leq F(n) \leq 28.3^n$ for large $n$. The upper
bound of $F(n)$ is reduced to $F(n) \leq 13.5^n$ in \cite{FIT2009}. 

Algorithms for generating floorplans were presented in \cite{Na2001}.
In \cite{YN2006}, a $(5n-5)$-bit binary string representation of $n$-block
floorplans was found. A different $5n$-bit binary string representation
of $n$-block floorplans was presented in \cite{YN2007}. The shortest
known binary string representation of $n$-block floorplans was given
in \cite{TFI2009}. This representation uses $(4n-4)$ bits.

Since $F(n) \geq 11.56^n$ for large $n$ \cite{ANY2007}, any binary
string representation of $n$-block floorplans must use at least 
$\log_2 11.56^n = 3.531n$ bits. Closing the gap between the known
$(4n-4)$-bit binary representation and the $3.531n$ lower bound remains an
open research problem \cite{TFI2009}.

\vspace{\baselineskip}
\noindent \underline{Mosaic Floorplans:}

It was shown in \cite{DG1998} that the set of $n$-block mosaic floorplans
has a one-to-one correspondence to the set of Baxter permutations,
and the number of $n$-block mosaic floorplans equals to the $n^{th}$
\emph{Baxter number} $B(n)$, which is defined as the following:

\begin{displaymath}
B(n) = \left( \begin{array}{c} n+1 \\ 1 \end{array} \right)^{-1}
\left( \begin{array}{c} n+1 \\ 2 \end{array} \right)^{-1}
\sum_{r=0}^{n-1} \left( \begin{array}{c} n+1 \\ r \end{array} \right) 
\left( \begin{array}{c} n+1 \\ r+1 \end{array} \right)
\left( \begin{array}{c} n+1 \\ r+2 \end{array} \right)
\end{displaymath}

In \cite{SC2003}, it was shown that $B(n) = \Theta(8^n/n^4)$. The first
few Baxter numbers (staring from $n=1$) are $\{1, 2, 6, 22, 92, 422, 2074, \ldots\}$.

There is a long list of papers on representation problem of mosaic floorplans.
\cite{MF1995} proposed a \emph{sequence pair} (SP) representation.
Two sets of permutations are used to represent the position relations
between blocks. The length of the representation is $2n\log_2 n$ bits.

\cite{HHCG2000} proposed a \emph{corner block list} (CB) representation
for mosaic floorplans. The representation consists of a list $S$ of blocks,
a binary string $L$ of $(n-1)$ bits, and a binary string $T$ of $2n-3$
bits. The total length of the representation is $(3n + n\log_2 n)$ bits.

\cite{YCS2003} proposed a \emph{twin binary sequences} (TBS)
representation for mosaic floorplans. The representation consists of
4 binary strings $(\pi, \alpha, \beta, \beta')$, where $\pi$ is a
permutation of integers $\{ 1, 2,\ldots, n\}$, and the other three
strings are $n$ or $(n-1)$ bits long. The total length of the
representation is $3n+n\log_2 n$.

A common feature of above representations is that each block in the
mosaic floorplan is given an explicit name (such as an integer
between 1 and $n$). They all use at least one list (or permutation)
of these names in the representation. Because at least $\log_2 n$
bits are needed to represent every integer in the range $[1,n]$,
the length of these representations is inevitably at least
$n\log_2 n$ bits.

A different approach was introduced in \cite{YCCG2003}. They use a
pair of \emph{twin pair binary trees} $t_1$ and $t_2$ to represent
mosaic floorplans. The blocks of the mosaic floorplan are not given
explicit names. Rather, the shape of the two trees $t_1$ and $t_2$
are used to encode the position relations of blocks. In this
representation, each tree consists of $2n$ nodes. Thus, each tree
can be encoded by using $4n$ bits. So the total length of the
representation is $8n$ bits. They also proposed an alternate
representation using a pair of $n$-node trees. However, the nodes
in the two trees are given names, and the length of the
representation is at least $2n\log_2 n$.

In \cite{SKM2003}, a representation called \emph{quarter-state-sequence}
(QSS) was presented. It uses a $Q$ sequence that represents the
configuration of one of the corners of the mosaic floorplan. The length
of the $Q$ sequence representation is $4n$ bits.
This is the best known representation for mosaic floorplans.

Because the number of $n$-block mosaic floorplans equals the $n^{th}$
Baxter number, at least $\log_2 B(n) = \log_2 \Theta(8^n/n^4) = 3n-o(n)$
bits are needed to represent mosaic floorplans.

\subsection{Our Main Result}

\begin{theorem}\label{theorem: main}
The set of $n$-block mosaic floorplans can be represented by
$(3n-3)$ bits, which is optimal up to an additive lower order term.
\end{theorem}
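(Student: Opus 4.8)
The plan is to prove the theorem constructively by exhibiting an explicit encoding and decoding scheme. The lower bound is already established in the excerpt (since the number of $n$-block mosaic floorplans equals the Baxter number $B(n) = \Theta(8^n/n^4)$, at least $3n - o(n)$ bits are required), so the entire burden falls on the upper bound: I must describe an injective map from the set of $n$-block mosaic floorplans into binary strings of length $3n-3$, together with an inverse that recovers the mosaic floorplan uniquely. The key insight I would exploit is that a mosaic floorplan is built up incrementally by inserting blocks one at a time, so that the encoding records, for each inserted block, a constant-sized piece of local information. With $n$ blocks and roughly $3$ bits of new information per block, this yields the $3n-3$ target; the constant and the subtracted $3$ come from the first few blocks carrying less information than the generic step.

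First I would fix a canonical construction order for mosaic floorplans. The natural choice is to sweep from one corner (say, processing blocks in the order induced by the T-junction structure, or equivalently following the bijection with Baxter permutations referenced in the excerpt via \cite{ABP2006,DG1998}). At each step a new block is attached along the current boundary, and its attachment is determined by two pieces of data: which way the new T-junction is oriented, and how far along the staircase boundary the block is placed. I would show that each such insertion step can be encoded with a bounded bit budget — the crux being to argue that the choices at each step are constrained enough (because no four subsections may meet at a point, forcing proper T-junctions $\vdash, \bot, \dashv, \top$) that amortized over all $n$ steps the total cost is exactly $3n-3$ bits. Here I would lean on a traversal of one of the constraint graphs $G_H(F)$ or $G_V(F)$ from Definition~\ref{definition: constraint graphs}, using the fact that each edge corresponds to a block, so the graph structure itself provides the skeleton of the encoding.

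The main obstacle I expect is proving injectivity: that distinct mosaic floorplans never map to the same $(3n-3)$-bit string, equivalently that the decoding procedure reconstructs the horizontal and vertical constraint graphs of Definition~\ref{definition: equivalent mosaic floorplans} unambiguously. The difficulty is that the local bits recorded at each insertion must globally determine the full relative-position structure; I would need a lemma guaranteeing that the boundary state after $k$ insertions is itself recoverable from the first $k$ blocks' codes, so the decoder can replay the construction deterministically. Establishing this requires a careful invariant on the "active boundary" (the staircase profile) showing that the recorded bits pin down both the location and orientation of each new block. A secondary technical point is the exact bit-counting: I must verify that the per-step cost sums to $3n-3$ rather than, say, $3n$ or $3n - O(1)$, which means tracking precisely how the initial blocks and the closing of the final boundary consume fewer bits, and confirming that no auxiliary length or separator information leaks extra bits into the string.
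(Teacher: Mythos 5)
Your outline has the right overall shape---the paper also encodes a mosaic floorplan by a block-by-block incremental process (run in reverse: it repeatedly removes a ``deletable rectangle'' from the current staircase, and the decoder replays the insertions), and it also spends exactly $3$ bits per block with the first block free, giving $3n-3$. But your proposal stops exactly at the point where the real work begins, and the step you defer is the one that can fail. You propose to record, for each inserted block, ``which way the new T-junction is oriented, and how far along the staircase boundary the block is placed,'' and you acknowledge that the crux is showing this fits in a bounded bit budget. As stated, it does not: a staircase profile can have $\Theta(n)$ steps, so ``how far along the boundary'' is generically a $\Theta(\log n)$-bit quantity, and no amortization argument is offered that would bring the total down to $3n-3$. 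Your appeal to the constraint graphs and to the no-four-corners condition does not by itself constrain the placement to constantly many choices.

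The paper closes this gap with two ideas that are absent from your proposal. First, it normalizes the floorplan to a \emph{standard form} (sliding segments so that, around every segment, the incident segments on the two sides are ordered in a prescribed way); this is legitimate because standard-form floorplans are equivalent as mosaic floorplans, and it costs no bits. Second, it proves the key lemma that in standard form every staircase arising in the removal sequence has a \emph{unique} deletable rectangle, and that consecutive deletable rectangles $r_i$ and $r_{i-1}$ are \emph{adjacent}. Uniqueness makes the removal order canonical (so nothing about ``which block comes next'' needs to be encoded), and adjacency collapses the entire ``where along the boundary'' question to a single bit: whether $r_i$ and $r_{i-1}$ share a horizontal or a vertical edge. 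The remaining two bits record which of four local types the deletable rectangle has relative to the step it sits on. Without the standard-form normalization the uniqueness claim is false (several blocks can be simultaneously deletable), and without the adjacency property the one-bit location encoding is unavailable; so your proposal, while pointed in the right direction, is missing the decisive lemmas rather than merely omitting routine details.
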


Most binary representations of mosaic floorplans discussed in section
\ref{section: previous} are fairly complex. In contrast,
the representation introduced in this paper is \emph{very simple}
and easy to implement.

By using the simple one-to-one correspondence between mosaic floorplans
and Baxter permutations described in \cite{ABP2006},
the methods presented in this paper also work on Baxter permutations.
Hence, our optimal representation of mosaic floorplans also leads to an
optimal representation of Baxter permutations and 
all other objects in the Baxter combinatorial family.

\section{Optimal Binary Representation of Mosaic Floorplans}

In this section, we describe our optimal representation of mosaic floorplans. 

\subsection{Standard Form of Mosaic Floorplans}

In the following, we introduce the notion of \emph{standard form} of
mosaic floorplans, which plays a central role in our
representation.

Let $M$ be a mosaic floorplan. Let $h$ be a horizontal line segment in
$M$. The \emph{upper segment set} of $h$ and the \emph{lower segment set}
of $h$ are defined as the following:

\vspace{\baselineskip}
\noindent ABOVE$(h)=$ the set of vertical line segments of $M$ that intersect $h$ and are above $h$.

\noindent BELOW$(h)=$ the set of vertical line segments of $M$ that intersect $h$ and are below $h$.

\vspace{\baselineskip}
Similarly, for a vertical line segment $v$ in $M$, the \emph{left segment set} of $v$ and the \emph{right segment set} of $h$ are defined as the following:

\vspace{\baselineskip}
\noindent LEFT$(v)=$ the set of horizontal segments of $M$ that intersect $v$ and are on the left of $v$.

\noindent RIGHT$(v)=$ the set of horizontal segments of $M$ that intersect $v$ and are on the right of $v$.
\vspace{\baselineskip}

\begin{figure}[htbp]
\centerline
{
\includegraphics[scale=0.14]{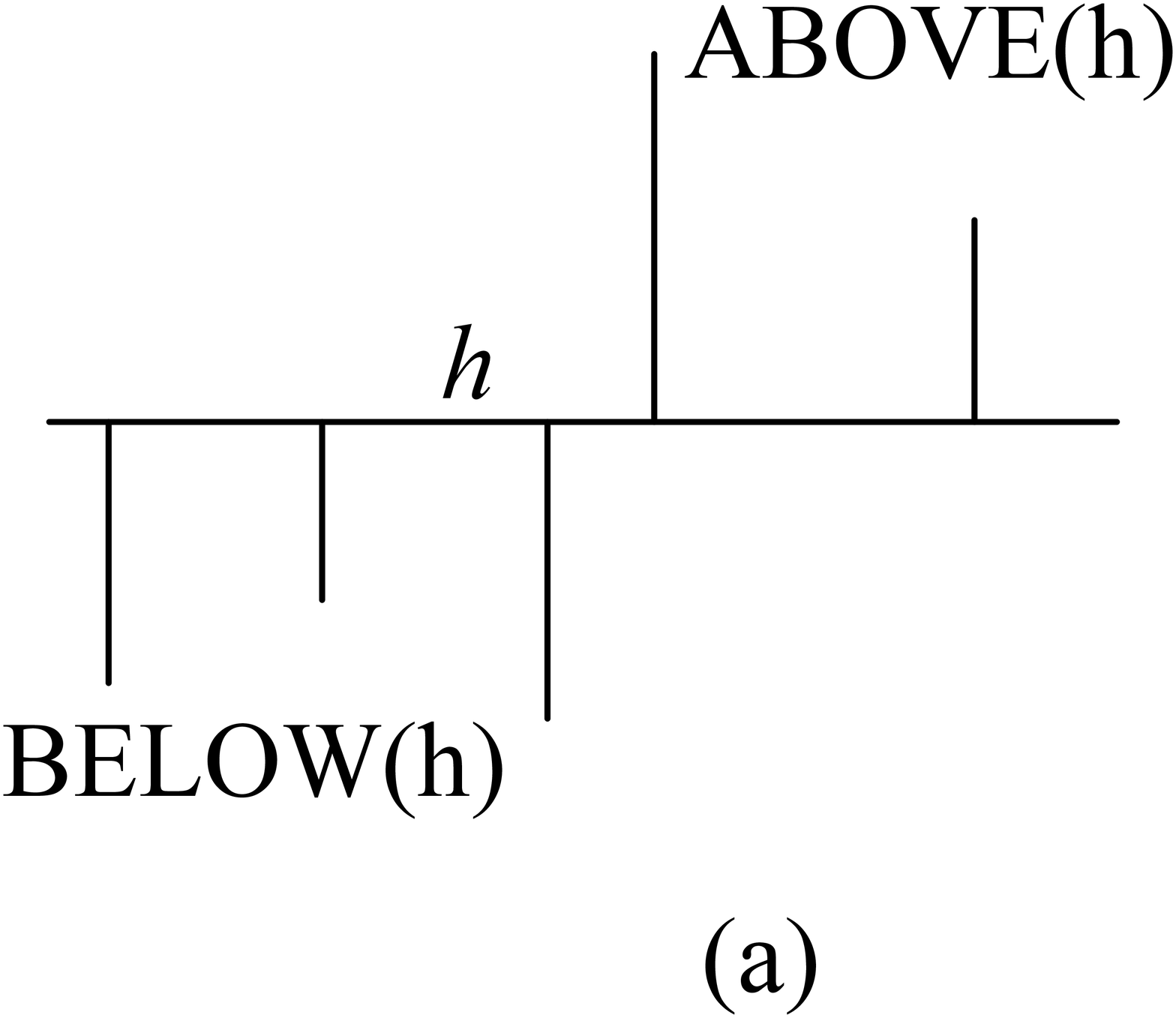}
\hspace{0.15in}
\includegraphics[scale=0.14]{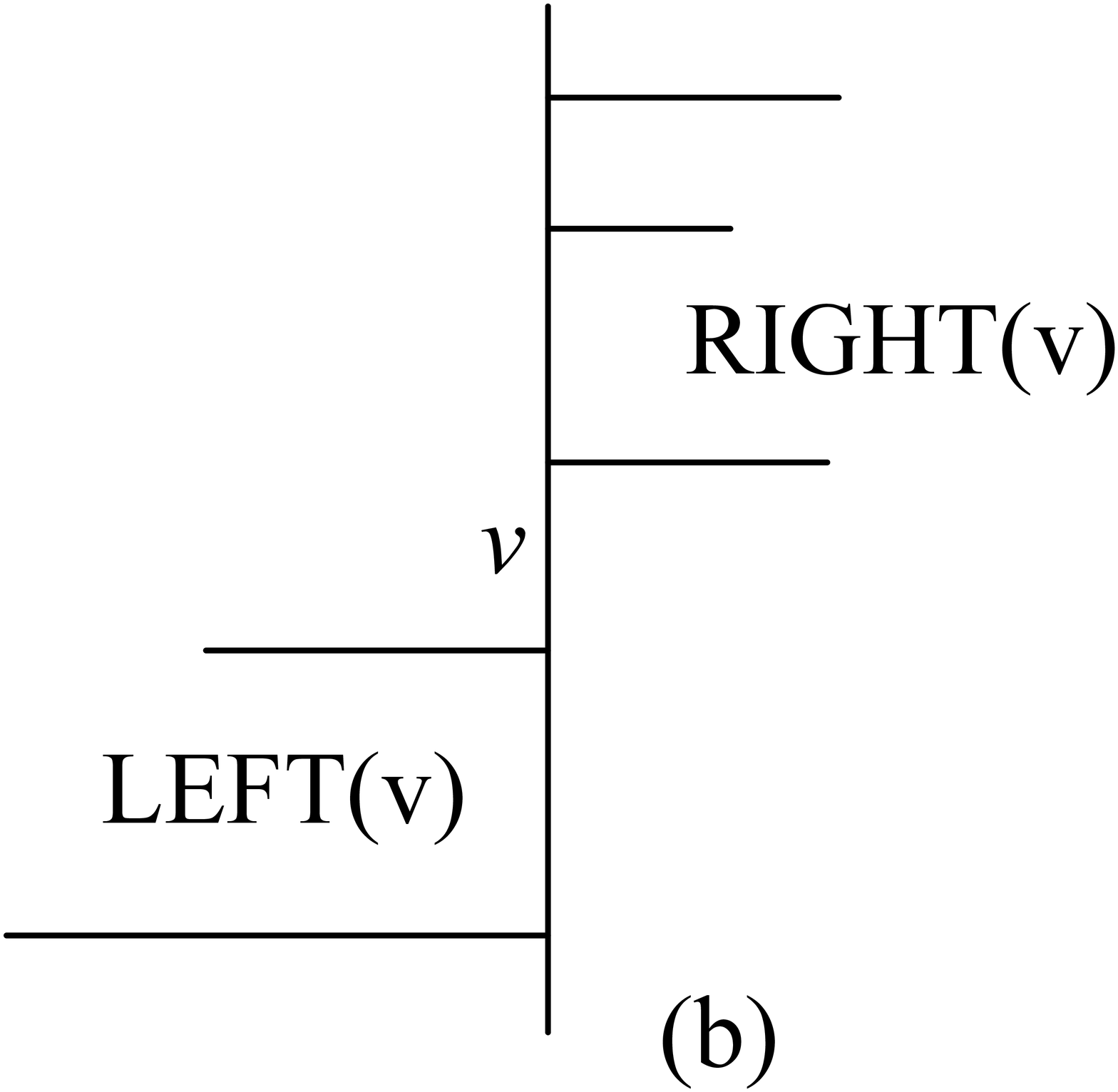}
}
\caption{Standard form of mosaic floorplans.}
\label{figure: standard form}
\end{figure}

\begin{definition}
{
A mosaic floorplan $M$ is in \emph{standard form} if the following hold:
\begin{enumerate}
\item For every horizontal segment $h$ in $M$, all vertical segments in ABOVE$(h)$ appear to the right of all vertical segments in BELOW$(h)$. (See Figure \ref{figure: standard form}(a).)
\item For every vertical segment $v$ in $M$, all horizontal segments in RIGHT$(v)$ appear above all horizontal segments in LEFT$(v)$. (See Figure \ref{figure: standard form}(b).)
\end{enumerate}
}
\end{definition}

The mosaic floorplan shown in Figure \ref{figure: example floorplans} (c) is the standard form of mosaic floorplans shown in Figure \ref{figure: example floorplans} (a) and Figure \ref{figure: example floorplans} (b).

The standard form $M_{\rm standard}$ of a mosaic floorplan $M$ can be obtained
by sliding its vertical and horizontal line segments. Because of the 
equivalence definition of mosaic floorplans, $M_{\rm standard}$ and $M$
are considered the same mosaic floorplans. For a given $M$, $M_{\rm standard}$
can be obtained in linear time by using the horizontal constraint graphs and
vertical constraint graphs described in \cite{HHCG2000}. From now on, all
mosaic floorplans are assumed to be in standard form.

\subsection{Staircases}

\begin{definition} \label{definition: staircase}
{ 
A \emph{staircase} is an object that satisfies the following conditions:

\begin{enumerate}
\item The border contains a line segment on the $x$-axis and a line segment on
the $y$-axis.
\item The remainder of the border is a non-increasing line segments consisting
of vertical and horizontal line segments.
\item The interior is divided into smaller rectangular subsections by
horizontal and vertical line segments.
\item No four subsections meet at the same point.
\end{enumerate}
}
\end{definition}

\begin{figure}[htbp]
\centerline
{
\includegraphics[scale=0.1]{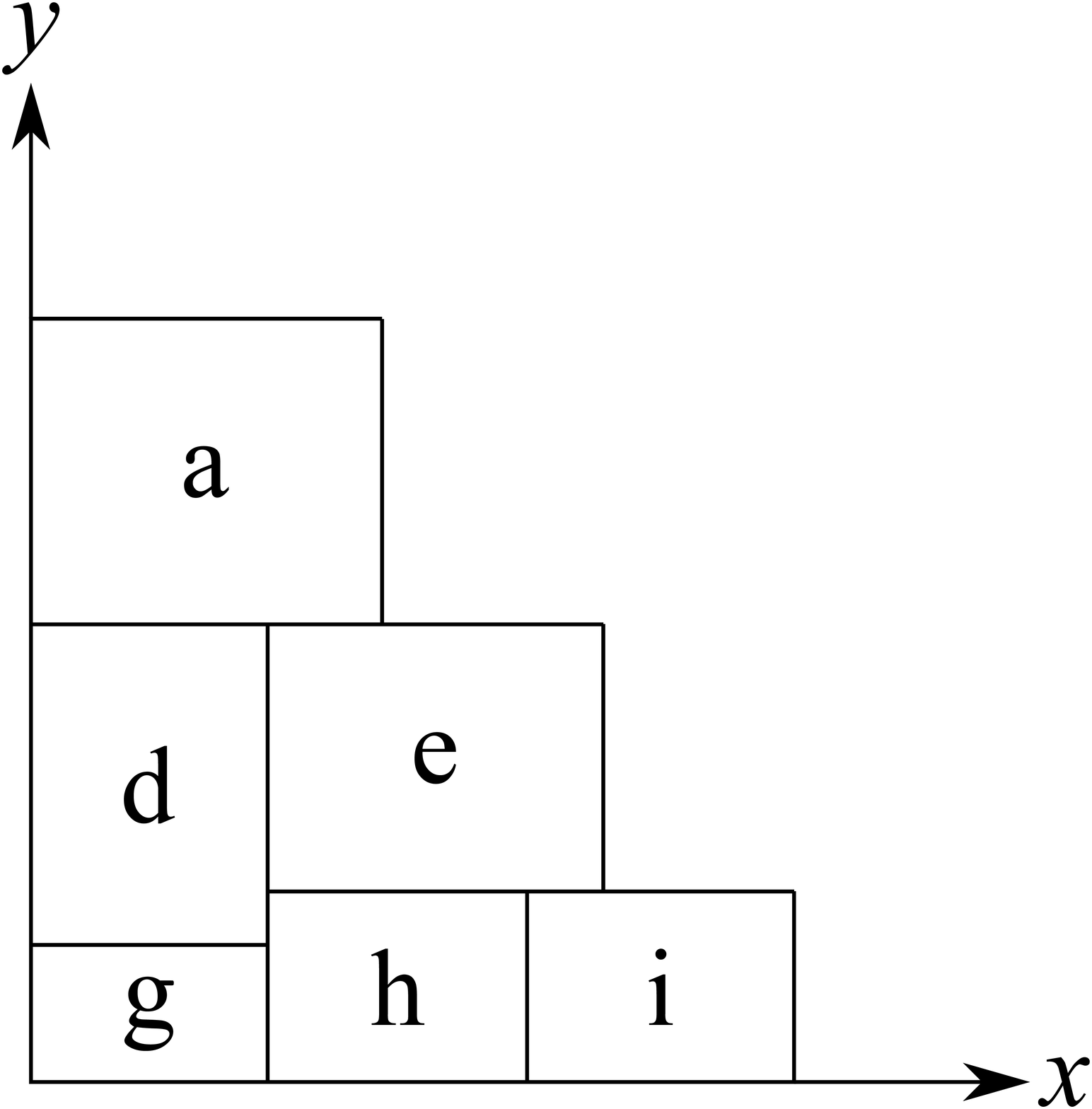}
}
\caption{A staircase with $n=6$ blocks and $m=3$ steps that is obtained
from the mosaic floorplan in Figure \ref{figure: example floorplans} (c) by
deleting the blocks $b,c$ and $f$.}
\label{figure: staircase}
\end{figure}

A \emph{step} of a staircase $S$ is a horizontal line segment on the
border of $S$, excluding the $x$-axis. Figure \ref{figure: staircase} shows
a staircase with $n=6$ blocks and $m=3$ steps. Note that a mosaic
floorplan is just a special case of a staircase with $m=1$ step.

\subsection{Deletable Rectangles}

\begin{definition}
{ 
A \emph{deletable rectangle} of a staircase $S$ is a block that satisfies
the following conditions:
\begin{enumerate}
\item Its top edge is completely contained in the border of $S$.
\item Its right edge is completely contained in the border of $S$.
\end{enumerate}
}
\end{definition}

In the staircase shown in
Figure \ref{figure: staircase}, the block $a$ is the only deletable rectangle.
The concept of deletable rectangles is a key idea for the methods introduced
in this paper. This concept was originally defined in \cite{TFI2009} for
their $(4n-4)$-bit representation of floorplans. However, a modified
definition of deletable rectangles is used in this paper to create a
$(3n-3)$-bit representation of mosaic floorplans. 

\begin{lemma} \label{lemma: removal}
The removal of a deletable rectangle from a staircase results in another
staircase unless the original staircase contains only one block.
\end{lemma}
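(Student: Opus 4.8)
The plan is to track how the boundary of $S$---the non-increasing staircase curve described in Definition~\ref{definition: staircase}---changes when the deletable rectangle $R$ is removed, and then verify that each of the four defining conditions of a staircase is restored for $S\setminus R$. First I would fix coordinates, writing $R=[x_1,x_2]\times[y_1,y_2]$ and labelling its corners $A=(x_1,y_2)$, $B=(x_2,y_2)$, $C=(x_2,y_1)$, and $D=(x_1,y_1)$. By the definition of a deletable rectangle, the top edge $AB$ and the right edge $BC$ lie on the border of $S$, so the corner $B$ is a convex (outer) corner of the monotone boundary and $R$ is a peninsula jutting out of the upper right of $S$.

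The heart of the argument is a local boundary surgery. Removing $R$ deletes the two border edges $AB$ (horizontal) and $BC$ (vertical) and exposes the two formerly interior edges $AD$ (the left edge) and $DC$ (the bottom edge). Thus the boundary path $A\to B\to C$, which runs right-then-down, is replaced by the path $A\to D\to C$, which runs down-then-right. Both paths travel from the upper-left point $A$ to the lower-right point $C$ using only rightward and downward moves, so the replacement keeps the border non-increasing; the newly created corner at $D$ is merely a concave (inner) corner, which a staircase boundary is allowed to have. I would then check the two splice points. At $A$ the unchanged incoming border arrives either horizontally at height $y_2$ or vertically descending into $A$, and in either case prepending the downward edge $AD$ preserves monotonicity. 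At $C$ the exposed edge $DC$ arrives horizontally and joins the unchanged outgoing border, which proceeds either rightward (merging colinearly) or downward (forming a convex corner); both are compatible. Since we have only receded a corner of a single simple closed curve without introducing a crossing, this establishes that the border of $S\setminus R$ is again a simple non-increasing curve.

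The remaining two conditions I expect to be immediate. The blocks other than $R$ are untouched, so the interior is still partitioned into rectangles, and because deleting $R$ can only remove incidences at any point, no new four-way meeting can appear and the ``no four subsections meet at a point'' property is inherited from $S$. For the axis condition I would argue that the resulting region still contains a segment on each axis: if both the left edge $AD$ lay on the $y$-axis and the bottom edge $DC$ lay on the $x$-axis, then all four edges of $R$ would be on the border and $R$ would constitute the entire staircase, contradicting the hypothesis that $S$ has more than one block. Hence at least one of $AD$, $DC$ is interior, and a nonempty portion of each axis survives the deletion.

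The main obstacle, and the place where care is genuinely needed, is the degenerate configurations in which $R$ abuts an axis, because there the surgery changes shape. If $DC$ lies on the $x$-axis then the corner $C$ and the entire bottom edge vanish into the exterior and $D$ becomes the new bottom-right corner at which the monotone curve terminates; symmetrically, if $AD$ lies on the $y$-axis then $D$ becomes the new top-left corner at which the monotone curve begins. I would treat these two cases explicitly and confirm that in each the resulting curve is still a single non-increasing staircase boundary, thereby completing the verification that $S\setminus R$ is a staircase whenever $S$ has more than one block.
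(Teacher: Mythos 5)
Your proposal is correct and follows essentially the same route as the paper: both verify the four conditions of Definition~\ref{definition: staircase} directly, observing that removing a corner block replaces a right-then-down stretch of the border by a down-then-right one. Your version is simply more detailed (explicit boundary surgery at the corners $A$, $D$, $C$, and the case analysis for when $R$ abuts an axis), whereas the paper dispatches each condition in a sentence.
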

\begin{proof}
Let $S$ be a staircase with more than one block and let $r$ be a
deletable rectangle in $S$. Define $S'$ to be the object that
results when $r$ is removed from $S$. Because the removal of $r$
still leaves $S'$ with at least one block, the border of $S'$
still contains a line segment on the $x$-axis and a line segment on
the $y$-axis, so condition (1) of a staircase holds for $S'$. 
Removing $r$ will not cause the remainder of the border to have
an increasing line segment because the right edge of $r$ must be
completely contained in the border, so condition (2) of a staircase
also holds for $S'$. The removal of $r$ does not form new line
segments, so the interior of $S'$ will still be
divided into smaller rectangular subsections by vertical and
horizontal line segments, and no four subsections in $S'$ will
meet at the same point. Thus, conditions (3) and (4) of a staircase
hold for $S'$. Therefore, $S'$ is a staircase.
\end{proof}

We can now outline the basic ideas of our representation.
Given a mosaic floorplan $M$, we remove deletable rectangles of $M$ one
by one. By Lemma \ref{lemma: removal}, this results in a sequence of
staircases, until only one block remains. We record necessary location
information of these deletable rectangles (which will be the binary
representation of $M$) so that we can reconstruct the original floorplan $M$.
However, if there are multiple deletable rectangles for these staircases,
we will have to use more bits than we can afford. Fortunately, the
following key lemma shows that this does not happen. 

\begin{lemma} \label{lemma: unique}
Let $M$ be a $n$-block mosaic floorplan in standard form. Let $S_n=M$,
and let $S_{i-1}$ ($2 \leq i \leq n$) be the staircase
obtained by removing a deletable rectangle $r_i$ from $S_i$.

\begin{enumerate}
\item There is a single, unique deletable rectangle in $S_i$ for $1 \leq i \leq n$.
\item $r_{i-1}$ is adjacent to $r_i$ for $2 \leq i \leq n$.
\end{enumerate}
\end{lemma}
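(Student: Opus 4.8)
The plan is to prove both parts by downward induction on $i$, from $S_n=M$ to $S_1$, carrying the invariant that $S_i$ is a staircase obtained from the standard-form floorplan $M$ by deleting blocks. The first thing I would record is the reduction of candidates: if a block is deletable then its top edge lies along a step (or the top side) of the boundary and its right edge along a vertical drop (or the right side), so its top-right corner is a \emph{convex} corner of the staircase boundary; hence the only candidates are the ``corner blocks'' $b_1,\dots,b_m$, one at each convex corner. For the base case $S_n=M$ this already settles uniqueness: a block of the rectangle is deletable iff its top edge lies on the top side and its right edge on the right side, which forces it to be the single top-right corner block. I would also note that the two ordering conditions of the standard form are inherited by every $S_i$, since deleting blocks only removes branch-junctions and preserves their order along each segment.

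For the inductive step I would analyze the corner blocks through a local dichotomy at each reflex (concave) corner $p$. Let $w$ be the maximal vertical segment of $S_i$ whose boundary edge descends into $p$. If $w$ continues below $p$, then the corner block $b$ lying against the upper part of $w$ has its bottom edge meeting $w$ from the left, while the step leaving $p$ rightward is a right-branch of $w$ at $p$; condition~(2) (right-branches above left-branches) then forces the bottom of $b$ strictly below $p$, so $b$'s right edge leaves the boundary and $b$ fails the right-condition (it ``dips''). If instead $w$ terminates at $p$, then condition~(1), applied to the step leaving $p$, shows that no vertical subdivides the region just beneath that step to the right of $p$, so the block below it is a single block reaching the next convex corner whose top edge runs left past $p$; that block fails the top-condition (it ``overhangs''). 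Encoding each reflex corner by $a$ (dip) or $b$ (overhang), and setting boundary letters $s_0=a$ (the topmost corner block cannot overhang, since its step abuts the $y$-axis) and $s_m=b$ (the last corner block cannot dip, since its drop reaches the $x$-axis), I get that $b_j$ is deletable exactly at an ascent $s_{j-1}=a,\,s_j=b$. This immediately yields existence, because the word $s_0\cdots s_m$ runs from $a$ to $b$ and so has at least one ascent.

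The main obstacle is uniqueness, which is equivalent to the word being monotone, $a^{\ast}b^{\ast}$, i.e. to the statement that \emph{no single corner block both overhangs and dips} (a descent $b\to a$). I would stress that this does \emph{not} follow from conditions~(1) and (2) on $S_i$ alone: one can build a standalone staircase satisfying both conditions yet possessing two disjoint deletable corners. The point is that $S_i$ is a restriction of the globally standard-form $M$. If a corner block $b_{j+1}$ both overhangs and dips, then restoring in $M$ the deleted blocks that cover its step produces, on the maximal vertical segment through $b_{j+1}$'s right edge, a left-branch at the height of that step sitting \emph{above} the right-branch created lower down by the dip, contradicting condition~(2) for $M$ itself. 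Establishing this cleanly is the crux, and it is exactly where the argument must leave the staircase and appeal to the ambient standard form of $M$.

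Finally, for part~2 and to close the induction, I would use that removing the unique deletable rectangle $r_i=b_{j^{\ast}}$ only turns the (formerly interior) left and bottom edges of $b_{j^{\ast}}$ into boundary, leaving the rest of the boundary untouched; hence the dip/overhang status of every corner block away from $j^{\ast}$ is unchanged, $S_{i-1}$ is again a peeled standard-form staircase, and the only blocks whose deletability can change are the former neighbors of $b_{j^{\ast}}$. Thus the new unique deletable $r_{i-1}$ lies among those neighbors, giving the adjacency $r_{i-1}$ to $r_i$, and the monotonicity of the word is maintained. Throughout, Lemma~\ref{lemma: removal} guarantees that each step of the peeling indeed produces a staircase, so the induction is well defined; I expect the delicate bookkeeping of exactly how the ascent relocates under the removal, together with the global condition~(2) argument of the previous paragraph, to be the parts that demand the most care.
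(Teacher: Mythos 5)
Your route is genuinely different from the paper's. The paper never characterizes all deletable rectangles of a general peeled staircase: it runs the reverse induction on the stronger invariant ``$S_{i+1}$ has exactly one deletable rectangle,'' so that after $r_{i+1}$ is removed only two blocks can possibly become deletable --- the uppermost block $a$ whose right edge lies on the segment $v$ through $r_{i+1}$'s left edge, and the rightmost block $b$ whose top edge lies on the segment $h$ through $r_{i+1}$'s bottom edge. The T-junction type of $h$ and $v$ at $r_{i+1}$'s bottom-left corner then eliminates exactly one of $a,b$ via the standard-form conditions. That junction analysis is precisely your dip/overhang dichotomy applied to the single new reflex corner created by the removal, so the local engine is the same; but by exploiting the inductive hypothesis the paper never has to exclude two far-apart deletable corners, i.e.\ it never needs your monotonicity-of-the-word lemma. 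Your version proves more (a full characterization of the deletable corners of an arbitrary peeled staircase, and a genuine existence statement that the paper only asserts), at the price of that extra global step.

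That extra step is where your sketch has a concrete hole. You are right that ``no corner block both overhangs and dips'' cannot be read off the ordering conditions of $S_i$ alone (the three-block staircase $[0,1]\times[2,3]$, $[0,2]\times[0,2]$, $[2,3]\times[0,1]$ satisfies both conditions yet has two deletable corners), so the argument must take place in $M$. But the contradiction you describe --- a left-branch of the maximal vertical segment $V$ through the bad block's right edge sitting above the right-branch created by the dip, against condition (2) --- only materializes when, at the bad block's top-right corner $c$, the segment $V$ continues upward in $M$, so that the horizontal segment $H$ through the block's top edge terminates on $V$ and really is a left-branch. In the other T-junction orientation at $c$ ($H$ continues to the right and $V$ terminates at $c$ from below) there is no such left-branch, and you must instead apply condition (1) to $H$: then $V\in\mathrm{BELOW}(H)$ while the vertical drop that ends on $H$ at the overhang's reflex corner lies in $\mathrm{ABOVE}(H)$ and is strictly to the left of $V$, again a contradiction. (You should also dispose of the degenerate case where $c$ sits on the outer boundary of $M$, which is incompatible with overhanging or dipping outright.) With that case supplied your argument closes; the remaining pieces of your outline --- inheritance of the ordering conditions by each $S_i$, and adjacency of $r_{i-1}$ to $r_i$ because only a block sharing a positive-length edge with $r_i$ can change deletability status --- are sound.
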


\begin{proof}
The proof is by reverse induction.

Clearly, $S_n=M$ has only one deletable rectangle located at the
top right corner of $M$.

Assume that $S_{i+1}$ ($i \leq n-1$) has exactly one deletable rectangle
$r_{i+1}$. Let $h$ be the horizontal line segment in $S_{i+1}$ that contains
the bottom edge of $r_{i+1}$, and let $v$ be the vertical line
segment in $S_{i+1}$ that contains the left edge of $r_{i+1}$
(see Figure \ref{figure: unique}).
Let $a$ be the uppermost block in $S_{i+1}$ whose right edge aligns with
$v$, and let $b$ be the rightmost block in $S_{i+1}$
whose top edge aligns with $h$. After $r_{i+1}$ is removed from $S_{i+1}$,
$a$ and $b$ are the only candidates for deletable rectangles of 
the resulting staircase $S_i$. There are two cases:

\begin{figure}[htbp]
\centerline
{
\includegraphics[scale=0.12]{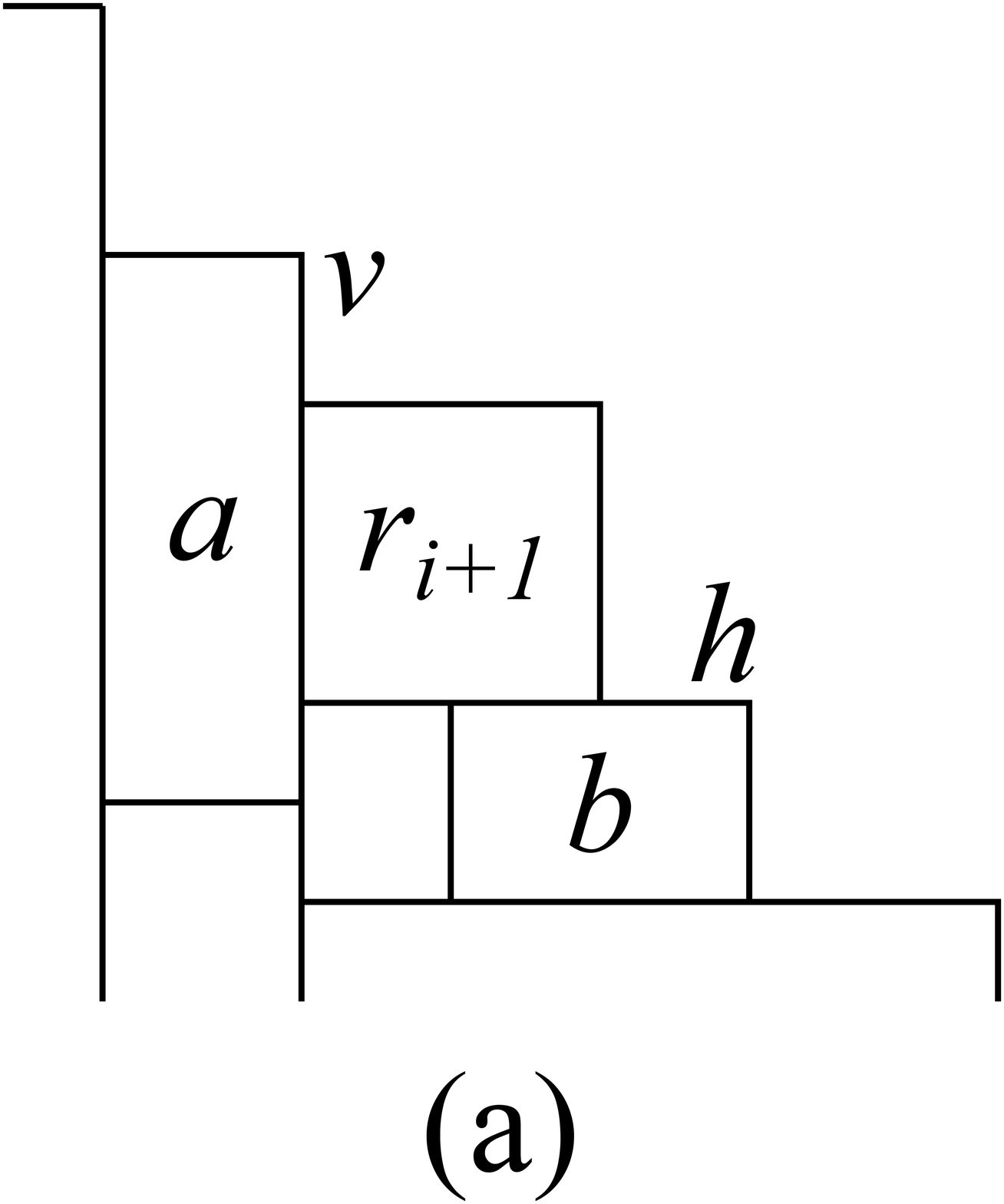}
\hspace{0.5in}
\includegraphics[scale=0.12]{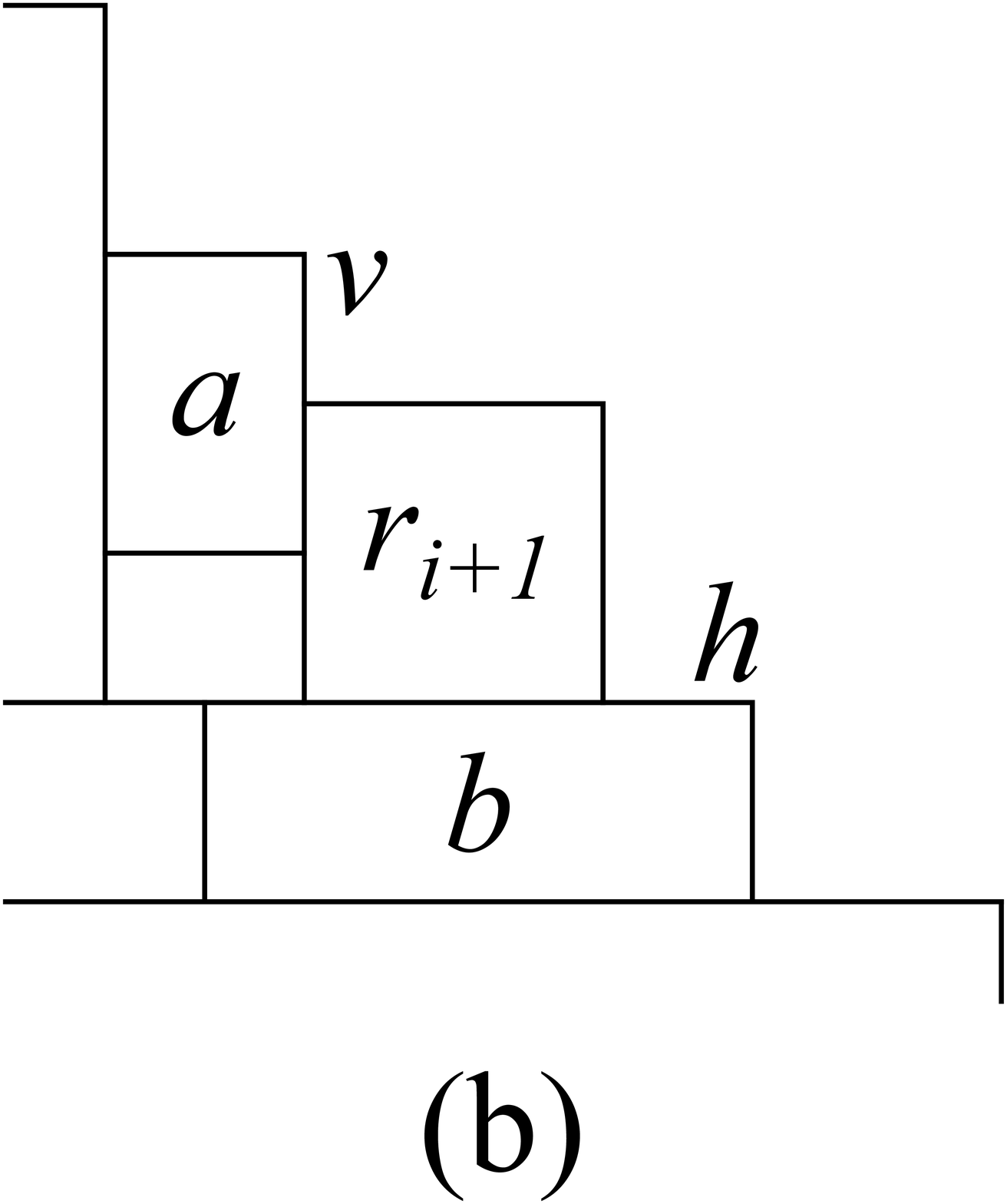}
}
\caption{Proof of Lemma \ref{lemma: unique}.}
\label{figure: unique}
\end{figure}

\begin{enumerate}
\item The line segments $h$ and $v$ form a $\vdash$-junction
(see Figure \ref{figure: unique} (a).) Then, the bottom edge of $a$ must
be below $h$ because $M$ is a standard mosaic floorplan, and $a$ is not
a deletable rectangle in $S_i$. Thus, the block $b$ is the only deletable
rectangle in $S_i$.

\item The line segments $h$ and $v$ form a $\perp$-junction (see
Figure \ref{figure: unique} (b).) Then, the left edge of $b$ must be
to the left of $v$ because $M$ is a standard mosaic floorplan, and
$b$ is not a deletable rectangle in $S_i$. Thus, the block $a$
is the only deletable rectangle in $S_i$.
\end{enumerate}

In both cases, only one deletable rectangle $r_i$ (which is either
$a$ or $b$) is revealed when the deletable rectangle $r_{i+1}$ is
removed. Because there is only one deletable rectangle in $S_n=M$,
all subsequent staircases contain exactly one deletable rectangle.
Thus, (1) is true. Also, $r_{i+1}$ is adjacent to $r_i$ in both
cases, so (2) is true.
\end{proof}

\vspace{\baselineskip}
Let $S$ be a staircase and $r$ be a deletable rectangle of $S$
whose top side is on the $k$-th step of $S$. There are four types
of deletable rectangles.

\begin{figure}[htbp]
\centerline
{
\includegraphics[scale=0.11]{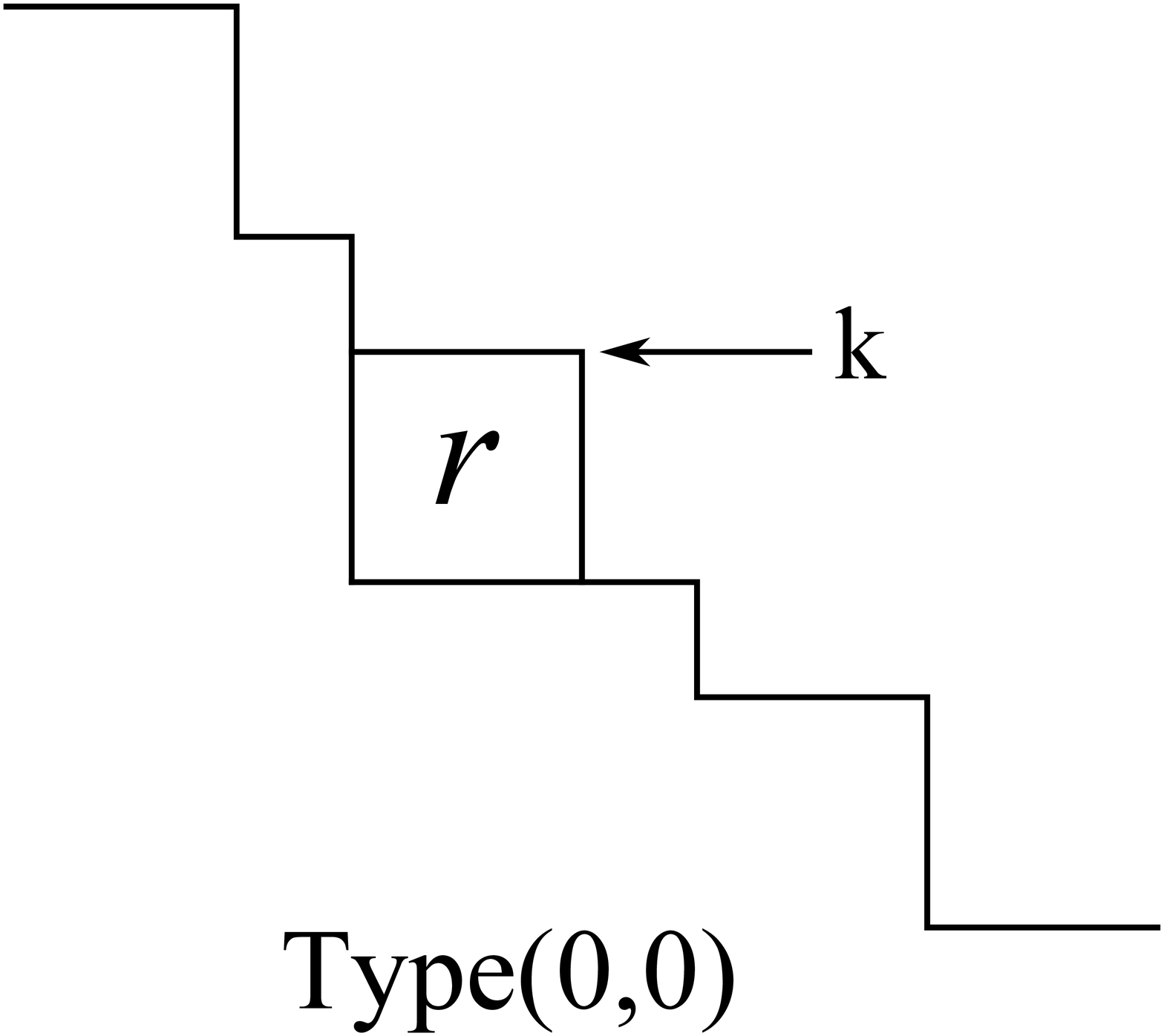}
\hspace{0.4in}
\includegraphics[scale=0.11]{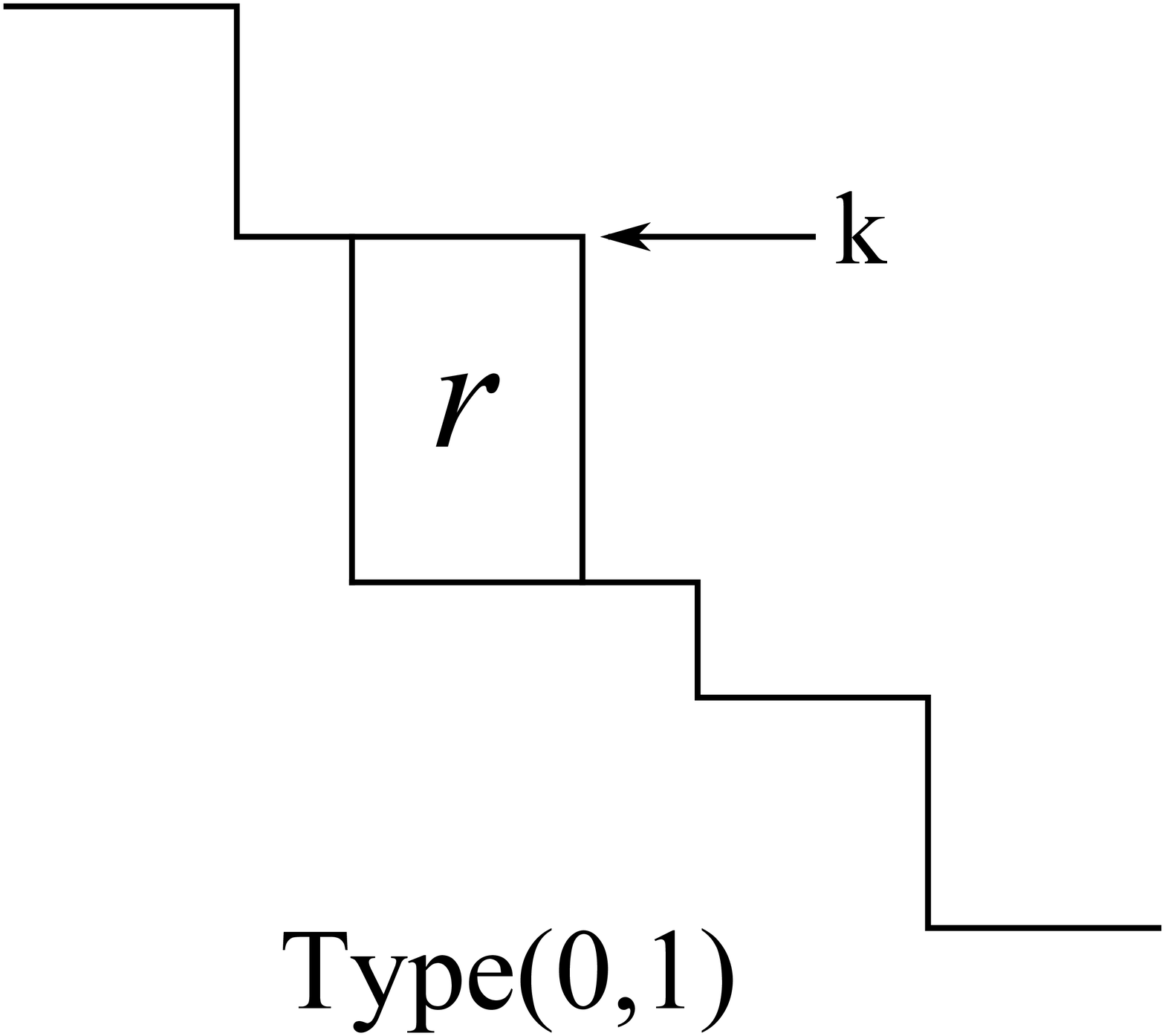}
\hspace{0.4in}
\includegraphics[scale=0.11]{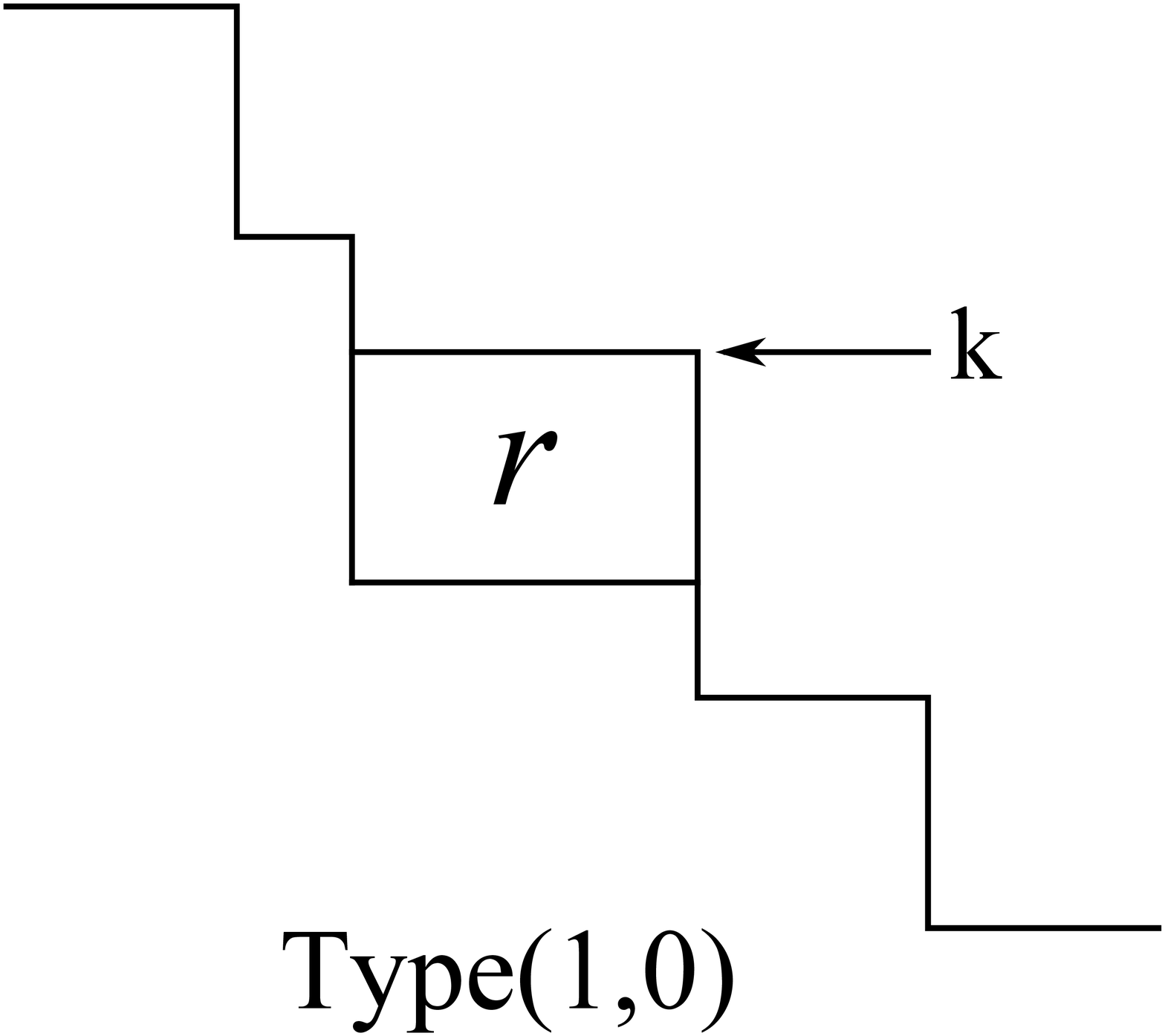}
\hspace{0.4in}
\includegraphics[scale=0.11]{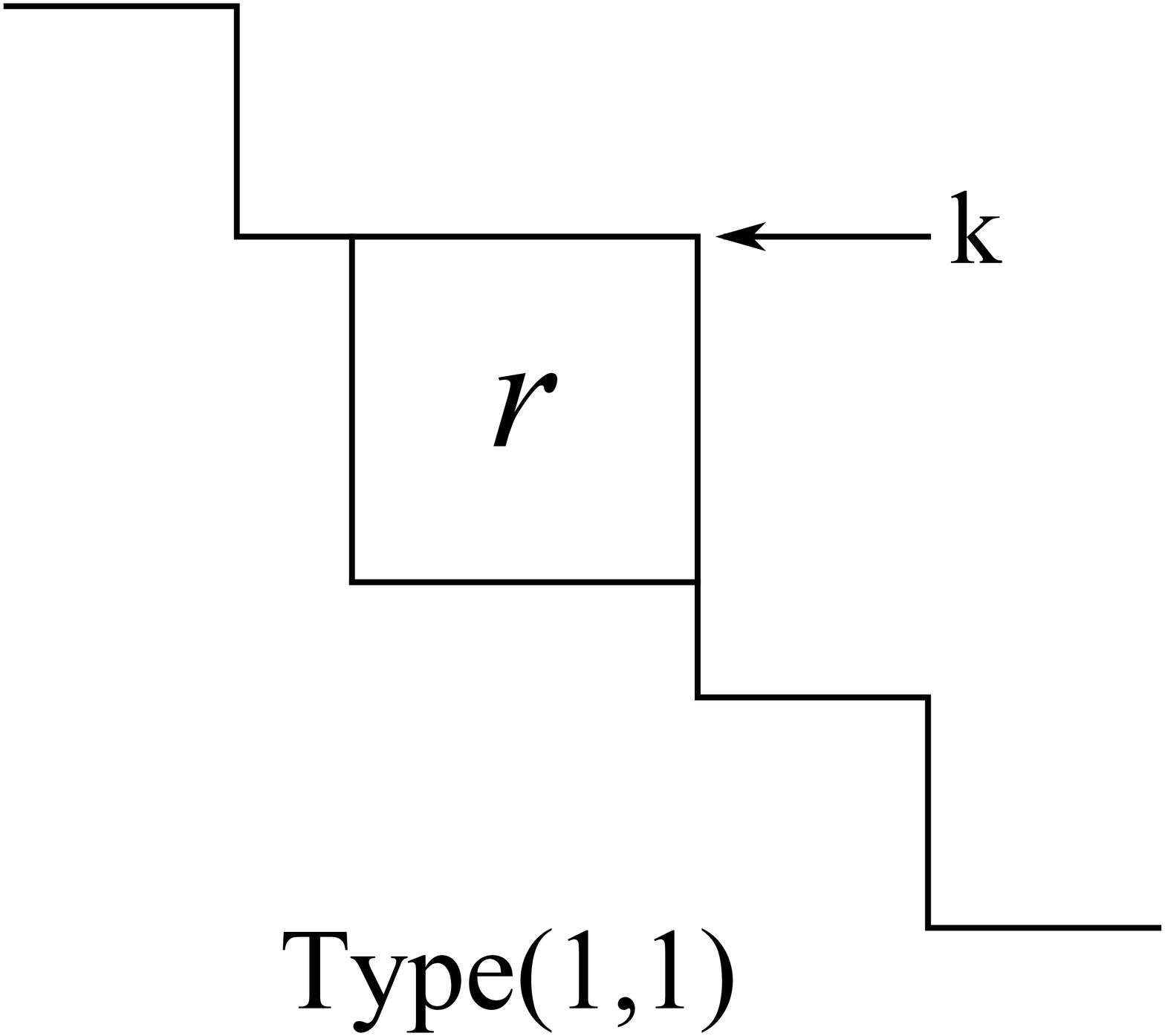}
}
\caption{The four types of deletable rectangles.}
\label{figure: types}
\end{figure}

\begin{enumerate}

\item Type $(0,0)$:
\begin{enumerate}
\item The top side of $r$ is the entire $k$-th step.
\item The right side of $r$ intersects the $(k-1)$-th step.
\item The deletion of $r$ decreases the number of steps by one.
\end{enumerate}

\item Type $(0,1)$:
\begin{enumerate}
\item The top side of $r$ is only a part of the $k$-th step.
\item The right side of $r$ intersects the $(k-1)$-th step.
\item The deletion of $r$ does not change the number of steps.
\end{enumerate}

\item Type $(1,0)$:
\begin{enumerate}
\item The top side of $r$ is the entire $k$-th step.
\item The right side of $r$ is only a part of the right side of the $k$-th
step (namely the right-bottom corner of $r$ is a $\dashv$ shape junction).
\item The deletion of $r$ does not change the number of steps.
\end{enumerate}

\item Type $(1,1)$:
\begin{enumerate}
\item The top side of $r$ is only a part of the $k$-th step.
\item The right side of $r$ is only a part of the right side of the $k$-th
step (namely the right-bottom corner of $r$ is a $\dashv$ shape junction).
\item The deletion of $r$ increases the number of steps by one.
\end{enumerate}

\end{enumerate}

\subsection{Optimal Binary Representation}

Our binary representation of mosaic floorplans
depends on the fact that a mosaic floorplan $M$ is a special case of
a staircase and the fact that the removal of a deletable rectangle from
a staircase results in another staircase. The binary string used to
represent $M$ records the unique sequence of deletable rectangles that
are removed in this process. The information stored by this binary string
enable us to reconstruct the original mosaic floorplan $M$.

A 3-bit binary string is used to record the information for each deletable
rectangle $r_i$. The string has two parts: The type and the location of
$r_i$. To record the type of $r_i$, the bits corresponding to its type is
stored directly. To store the location, we note that, by Lemma \ref{lemma: unique},
two consecutive deletable rectangles $r_i$ and $r_{i-1}$ are adjacent to each other.
Thus, they must share either a horizontal edge or a vertical edge. 
A single bit can be used to record the location of $r_i$ with respect to
$r_{i-1}$: a 1 if they share a horizontal edge, and a 0 if they
share a vertical edge.

\vspace{\baselineskip}
\noindent \textbf{Encoding Procedure:}

Let $M$ be the $n$-block mosaic floorplan to be encoded. Starting from
$S_n=M$, remove the unique deletable rectangles $r_i$, where
$2 \leq i \leq n$, one by one. For each deletable rectangle
$r_i$, two bits are used to record the type of $r_i$, and one bit is
used to record the type of the common boundary shared by $r_i$ and $r_{i-1}$.

\begin{figure}[h]
\centerline
{
\includegraphics[scale=0.6]{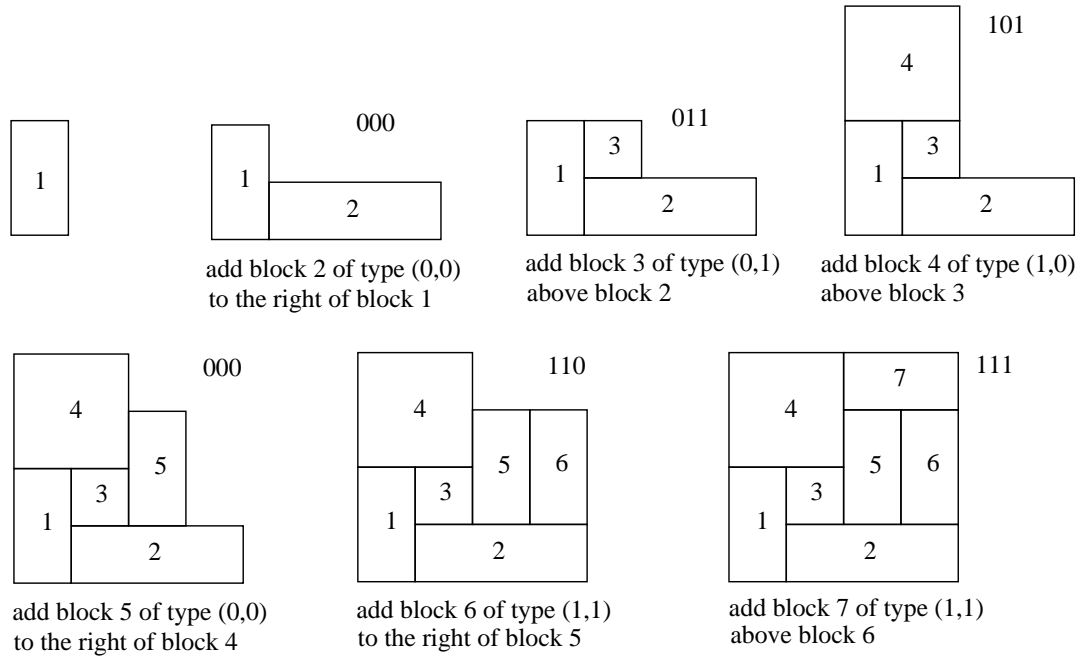}
}
\caption{An example of the presentation.}
\label{figure: example}
\end{figure}

\vspace{\baselineskip}
\noindent \textbf{Decoding Procedure:}

The decoding procedure simply reverses the process of removing deletable
rectangles. The process starts with the staircase $S_1$, which is a
single rectangle. Each staircase $S_{i+1}$ can be reconstructed from
the staircase $S_i$ by using the three-bit code for the deletable
rectangle $r_{i+1}$. The three-bit code records the type of $r_{i+1}$
and the type of edge shared by $r_i$ and $r_{i+1}$, so $r_{i+1}$ can be
uniquely added to $S_i$. Thus, the decoding procedure can reconstruct
original mosaic floorplan $S_n=M$.

Figure \ref{figure: example} show an example of the reconstruction
of a mosaic floorplan from its representation:
\begin{center}
000 011 101 000 110 111
\end{center}

The lower left block of the mosaic floorplan $M$ (which is the only
block of $S_1$) does not need any information to be recorded.
Each of the other blocks of $M$ needs three bits. Thus the total length
of the binary representation of $M$ is $(3n-3)$ bits.
This completes the proof of Theorem \ref{theorem: main}.

\vspace{\baselineskip}

\section{Conclusion}

In this paper, we introduced a binary representation of $n$-block
Mosaic floorplans. The representation uses $(3n-3)$ bits.
Since any representation of $n$-block mosaic floorplans requires
at least $(3n-o(n))$ bits \cite{SC2003}, our
representation is optimal (up to an additive lower term).
Our representation is very simple and easy to implement.

Mosaic floorplans are known to have a simple one-to-one
correspondence with Baxter permutations. So the method used to
represent mosaic floorplans in this paper also lead to an optimal $(3n-3)$ bits
representation of Baxter permutation of length $n$, and all objects in
the Baxter combinatorial family.

\bibliographystyle{plain} 


\end{document}